\documentclass{article}

\usepackage{amsmath}
\usepackage{algcompatible}
\usepackage{graphicx,psfrag,amsmath,amssymb,epsf}
\usepackage{epsfig,graphics}

\newtheorem{theorem}{Theorem}[section]
\newtheorem{lemma}{Lemma}[section]
\newtheorem{corollary}{Corollary}[section]
\newtheorem{definition}{Definition}[section]

\newtheorem{observation}{Observation}[section]

\newcommand{\qed}{\hfill\hbox{\rlap{$\sqcap$}$\sqcup$}}
\newenvironment{proof}{\noindent \emph{Proof.\,}}{\qed}

\usepackage[linesnumbered,boxed]{algorithm2e}

\title{The Longest Subsequence-Repeated Subsequence Problem}

\author{Manuel Lafond\footnote{Department of Computer Science, Université de Sherbrooke, Sherbrooke, Quebec J1K 2R1, Canada. Email: {\tt manuel.lafond@usherbrooke.ca}.}
\and 
Wenfeng Lai \footnote{College of Computer Science and Technology, Shandong University, Qingdao, China. Email: {\tt 2290892069@qq.com}.}
\and 
Adiesha Liyanage \footnote{Gianforte School of Computing, Montana State University, Bozeman, MT 59717, USA. Email: {\tt adiesha@gmail.com}.}
\and 
Binhai Zhu \footnote{Gianforte School of Computing, Montana State University, Bozeman, MT 59717, USA. Email: {\tt bhz@montana.edu}.} 
}

\date{}

\begin{document}

\maketitle

\begin{abstract}
Motivated by computing duplication patterns in sequences, 
a new fundamental problem called the longest subsequence-repeated subsequence (LSRS) is proposed.
Given a sequence $S$ of length $n$, a letter-repeated subsequence is a subsequence of $S$ in the form of $x_1^{d_1}x_2^{d_2}\cdots x_k^{d_k}$ with $x_i$ a subsequence of $S$, $x_j\neq x_{j+1}$ and $d_i\geq 2$ for all $i$ in $[k]$ and $j$ in $[k-1]$. 
We first present an $O(n^6)$ time algorithm to compute the longest cubic subsequences of all the $O(n^2)$ substrings of $S$, improving the trivial $O(n^7)$ bound. Then, an $O(n^6)$ time algorithm for computing the longest subsequence-repeated subsequence (LSRS) of $S$ is obtained. Finally we focus on two variants of this problem. We first consider the constrained version when $\Sigma$ is unbounded, each letter appears in $S$ at most $d$ times and all the letters in $\Sigma$ must appear in the solution. We show that the problem is NP-hard for $d=4$, via a reduction from a special version of SAT (which is obtained from 3-COLORING). We then show that when each letter appears in $S$ at most $d=3$ times, then the problem is solvable in $O(n^4)$ time.
\end{abstract}

\section{Introduction}

Finding patterns in long sequences is a fundamental problem in string
algorithms, combinatorial pattern matching and computational biology.
In this paper we are interested in long patterns occurring at a global level, which has also been considered previously. One prominent example
is to compute the longest square subsequence of a string $S$ of length $n$, which was solved by Kosowski in $O(n^2)$ time in 2004 \cite{DBLP:conf/spire/Kosowski04}. The bound is conditionally optimal as any
$o(n^{2-\varepsilon})$ solution would lead to a subquadratic bound for the traditional Longest Common Subsequence (LCS) problem, which is not possible unless the SETH conjecture fails \cite{DBLP:conf/focs/BringmannK15}.
Nonetheless, a slight improvement was presented by Tiskin \cite{Tiskin13}; and Inoue et al. recently tried to solve the problem by introducing the parameter $M$ (which is the number of matched pairs in $S$) and $r$ (which is the length of the solution) \cite{DBLP:conf/spire/InoueIB20}.

In biology, it was found by Szostak and Wu as early as in 1980 that gene duplication is the driving force of evolution \cite{uneqcrnature}. There are two kinds of duplications: arbitrary segmental duplications (i.e., an arbitrary segment is selected and pasted at somewhere else) and tandem duplications (i.e., in the form of $X\rightarrow XX$, where $X$ is any segment of the input sequence).
It is known that the former duplications occur frequently in cancer genomes \cite{Ciriello13,CGARN11,Sharp05}. On the other hand, the latter are common under different scenarios; for example, it is known that the tandem duplication of
3 nucleotides {\tt CAG} is closely related to the Huntington disease \cite{Macdonald93}. In addition, tandem duplications can occur at the genome level (acrossing different genes) for certain types of cancer
\cite{Oesper12}. 

As duplication is common in biology, it was not a surprise that in the first sequenced human genome around 3\% of the genetic contents are in the form of tandem repeats \cite{Lander01}.
In 2004, Leupold et al. posed a fundamental question regarding tandem duplications: what is the complexity to compute the minimum tandem duplication distance between two sequences $A$ and $B$ (i.e., the minimum number of tandem duplications to convert $A$ to $B$). In 2020, Lafond et al. answered this open question by proving that this problem is NP-hard for an unbounded alphabet \cite{DBLP:conf/stacs/LafondZZ20}. Later in \cite{Lafond21}, Lafond et al. proved that the problem is NP-hard even if $|\Sigma|\geq 4$ by encoding each letter in the unbounded alphabet proof with a square-free string over a new alphabet of size 4 (modified from Leech's construction \cite{leech57}), which covers the case most relevant with biology, i.e., when $\Sigma=\{{\tt A},{\tt C},{\tt G},{\tt T}\}$ or $\Sigma=\{{\tt A},{\tt C},{\tt G},{\tt U}\}$ \cite{Lafond21}. Independently, Cicalese and Pilati showed that the problem is NP-hard for $|\Sigma|=5$ using a different encoding method \cite{DBLP:conf/iwoca/CicaleseP21}.

Besides duplication, another driving force in evolution is certainly mutation. As a simple example, suppose we have a toy singleton genome
${\tt ACGT}$ (note that a real genome certainly would have a much larger alphabet) and it evolves through two tandem duplications ${\tt ACGT}\cdot {\tt ACGT} \cdot {\tt ACGT}$ then another one on the second ${\tt GTA}$ to have $H={\tt ACGT}\cdot {\tt AC}\cdot {\tt GTA}\cdot {\tt GTA} \cdot {\tt CGT}$. If in $H$ some mutation occurs, e.g., the first $G$ is deleted and the second $G$ is changed to $T$ to have
$H'={\tt ACT}\cdot {\tt AC}\cdot {\tt TTA}\cdot {\tt GTA}\cdot {\tt CGT}$, then it is difficult to retrieve the tandem duplications from $H'$. Motivated by the above applications, Lai et al. \cite{DBLP:conf/cpm/LaiLZZ22} recently proposed the following problem called the {\em Longest Letter-Duplicated Subsequence}: Given a sequence $S$ of length $n$, compute
a longest letter-duplicated subsequence (LLDS) of $S$, i.e., a subsequence of $S$ in the form $x_1^{d_1}x_2^{d_2}\cdots x_k^{d_k}$ with $x_i\in\Sigma$, where $x_j\neq x_{j+1}$ and $d_i\geq 2$ for all $i$ in $[k]$, $j$ in $[k-1]$ and $\sum_{i\in [k]}d_i$ is maximized. A simple linear time algorithm can be obtained to solve LLDS. But some constrained
variation, i.e., all letters in $\Sigma$ must appear in the solution, is shown to be NP-hard.

In this paper, we extend the work by Lai et al. by looking at a more general version of LLDS, namely, the Longest Subsequence-repeated Subsequence (LSRS) problem of $S$, which follows very much the same definition as above
except that each $x_i$ is a subsequence of $S$ (instead of a letter). As a comparison, for the sequence $H'$, one of the optimal LLDS solutions is ${\tt AATTTT}={\tt A}^2{\tt T}^4$ while the LSRS solution is ${\tt ACAC}\cdot {\tt TAGTAG}=({\tt AC})^2({\tt TAG})^2$ which clearly gives more information about the duplication histories.
This motivates us studying LSRS and related problems in this paper.
Let $d$ be the maximum occurrence of any letter in the input
string $S$, with $|S|=n$. 
Let \emph{LSDS+}$(d)$ be the constrained version that all letters in $\Sigma$ must appear in the solution, and the maximum occurrence of
any letter in $S$ is at most $d$.
We summarize the results of this paper as follows. 
\begin{enumerate}
    \item We show that the longest cubic subsequences of all substrings of $S$ can be solved in $O(n^6)$ time, improving the trivial $O(n^7)$ bound.
    \item We show that LSRS can be solved  in $O(n^6)$ time.
    \item When $d\geq 4$, \emph{LSRS+}$(d)$ is NP-complete.
    \item When $d=3$, \emph{LSRS+}$(3)$ can be solved in $O(n^4)$ time.
\end{enumerate}
Note that the parameter $d$, i.e., the maximum duplication number, is practically meaningful in bioinformatics, since whole genome duplication is a rare event in many genomes and the number of duplicates is usually small. For example, it is known that plants have undergone up to three rounds of whole genome duplications, resulting in a number of duplicates bounded by 8 \cite{zheng2009gene}.

It should also be noted that our LSRS and LSRS+ problems seem to be related to the recently studied problems Longest Run Subsequence (LRS) \cite{DBLP:journals/almob/SchrinnerGWSSK21}, which is NP-hard; and Longest (Sub-)Periodic Subsequence \cite{DBLP:journals/corr/abs-2202-07189}, which is polynomially solvable. But these two problems are different from our LSRS and LSRS+ problems. For instance, in an LRS solution a letter can appear in at most one run while in our LSRS and LSRS+ solutions,
say ${\tt ACAC}\cdot {\tt TAGTAG}$ for the input string $H'$, a substring (e.g., {\tt AC}) can appear many times, hence a letter (e.g., {\tt A}) could appear many times but non-consecutively in LSRS and LSRS+ solutions. On the other hand,
in the Longest (Sub-)Periodic Subsequence problem one is very much only looking for the repetition of a single subsequence of the input string, while obviously in our LSRS and LSRS+ problems we need to find the repetitions of multiple subsequences of the input string (e.g., {\tt AC} and {\tt TAG}).

This paper is organized as follows. In Section 2 we give necessary definitions. In Section 3 we give an $O(n^6)$ time algorithm for computing the longest cubic subsequences of all substrings of $S$, as well as the solution for LSRS. In Section 4 we prove that $\emph{LSRS+}(4)$ is NP-hard and then we show that $\emph{LSRS+}(3$) can be solved in polynomial time. We conclude the paper in Section 5.

\section{Preliminaries}

Let $\mathbb{N}$ be the set of natural numbers. For $q\in\mathbb{N}$, we use $[q]$ to represent the set $\{1,2,...,q\}$ and we define $[i,j] = \{i, i+1, \ldots, j\}$.
Throughout this paper, a sequence $S$ is over a finite alphabet $\Sigma$.
We use $S[i]$ to denote the $i$-th letter in $S$ and $S[i..j]$ to denote
the substring of $S$ starting and ending with indices $i$ and $j$ respectively. (Sometimes we also use $(S[i],S[j])$ as an interval representing the substring $S[i..j]$.) With the standard run-length representation, $S$ can be represented as
$y_1^{a_1}y_2^{a_2}\cdots y_q^{a_q}$, with $y_i\in\Sigma,y_j\neq y_{j+1}$ and $a_j\geq 1$, for $i\in[q],j\in[q-1]$. 
Finally, a {\em subsequence} of $S$ is a string obtained by deleting some letters in $S$. Specifically, a {\em square subsequence} of $S$ is a subsequence of $S$ in the form of $X^2$, where $X$ is also a subsequence of $S$; and a
{\em cubic subsequence} of $S$ is a subsequence of $S$ in the form of $X^3$, where $X$ is a also subsequence of $S$. One is certainly interested in the longest ones in both cases.

A subsequence $S'$ of $S$ is a subsequence-repeated subsequence (SRS) of $S$ if it is in
the form of $x_1^{d_1}x_2^{d_2}\cdots x_k^{d_k}$, with $x_i$ being a subsequence of $S$, $x_j\neq x_{j+1}$ and $d_i\geq 2$, for $i\in[k],j\in[k-1]$. We call each $x_i^{d_i}$ in $S'$ a {\em subsequence-repeated block} (SR-block, for short).
For instance, let $S={\tt ACGAGCGCAGCGA}$, then $S_1={\tt AGAG}\cdot {\tt CGCGCG}$, $S_2={\tt ACGACG}\cdot {\tt CGCG}$ and $S_3={\tt ACGACG}\cdot {\tt CACA}$ are multiple solutions for the longest subsequence-duplicated subsequence of $S$, where any maximal substring in $S_i$ separated by $\cdot$ forms a SR-block. 
As a separate note, given this $S$, the longest square subsequence is ${\tt CAGCG\cdot CAGCG}=({\tt CAGCG})^2$ and
the longest cubic subsequence is ${\tt CGACGACGA}=({\tt CGA})^3$.

\section{A Polynomial-time Solution for LSRS}

In this section we proceed to solve the LSRS problem. Firstly, as a subroutine, we need to compute the longest cubic subsequences of all $O(n^2)$ substrings of $S$ in $O(n^6)$ time. 
Assuming that is the case, we have a way to solve LSRS as follows.

\subsection{Solution for the LSRS Problem}

With Kosowski's quadratic solution for the longest square subsequence (even though we could achieve our goal without using it, see Section 3.2) and
our $O(n^6)$ time solution for the longest cubic subsequence  (details to be given in Section 3.2), we solve the
LSRS problem by dynamic programming. We first have the following observation.

\begin{observation}
Suppose that there is an optimal LSRS solution for a given sequence $S$ of length $n$, in the form of $x_1^{d_1} x_2^{d_2} \ldots x_k^{d_k}$. Then it is
possible to decompose it into a generalized SR-subsequence in the form of $y_1^{e_1} y_2^{e_2} \ldots y_p^{e_p}$, where 
\begin{itemize}
    \item $ 2 \leq e_i \leq 3$, for $i\in[p]$,
    \item $p \geq k$,
    \item $y_j$ does not have to be different from $y_{j+1}$, for $j\in[p-1]$.
\end{itemize}
\label{prop:decomp}
\end{observation}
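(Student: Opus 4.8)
The plan is to prove the observation by induction on $k$, showing that each SR-block $x_i^{d_i}$ with $d_i \geq 2$ can itself be replaced by a concatenation of blocks whose exponents are all $2$ or $3$, using the same underlying string $x_i$ (so the third bullet — allowing $y_j = y_{j+1}$ — is exactly what makes the decomposition possible at the boundaries between consecutive copies of $x_i$). Concretely, I would first reduce to a single block: if I can show that for any subsequence $x$ and any $d \geq 2$, the string $x^d$ decomposes as $x^{e_1} x^{e_2} \cdots x^{e_t}$ with each $e_\ell \in \{2,3\}$ and $\sum_\ell e_\ell = d$, then concatenating these decompositions over $i \in [k]$ gives the claimed $y_1^{e_1} \cdots y_p^{e_p}$ form, with $p = \sum_i t_i \geq k$ since each $t_i \geq 1$, and with $\sum e_i = \sum d_i$ so the total length is preserved.

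The key arithmetic fact is then: every integer $d \geq 2$ can be written as a sum of $2$'s and $3$'s. This is elementary — write $d = 2a + 3b$ with $a, b \geq 0$; it holds for $d = 2$ ($a=1, b=0$), $d = 3$ ($a=0, b=1$), and for $d \geq 4$ one subtracts $2$ and applies induction. So $x^d = (x^2)^{a}(x^3)^{b}$, which is a valid sequence of blocks each having exponent in $\{2,3\}$; the only subtlety, and the reason the observation explicitly drops the $x_j \neq x_{j+1}$ requirement, is that adjacent blocks here are copies of the same string $x$, and likewise the last sub-block coming from $x_i$ and the first coming from $x_{i+1}$ may coincide if $x_i$ happened to equal... no, $x_i \neq x_{i+1}$ by hypothesis, so only the within-block adjacencies violate distinctness. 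I would state this clearly so the reader sees that no information is lost.

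I do not expect a genuine obstacle here; the statement is essentially a normalization lemma, and the ``hard part'' is merely to be careful that (a) every original block contributes at least one new block, giving $p \geq k$, and (b) the total exponent sum — hence the length $\sum_i d_i$ of the subsequence, which is what optimality refers to — is exactly preserved, so the decomposed form is still a subsequence of $S$ of the same length and still encodes the same optimal value. One should also note that the decomposed object is no longer an SRS in the strict sense of the paper's definition (because of repeated adjacent $y_j$), which is why it is called a ``generalized'' SR-subsequence; this generalized form is what the subsequent dynamic program will actually search over, since it only ever needs to guess the next square or cubic block. I would close by remarking that the decomposition is not unique but that any choice suffices for the DP.
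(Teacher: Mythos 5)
Your proposal is correct and takes essentially the same route as the paper: decompose each exponent $d_i \geq 2$ as a sum of $2$'s and $3$'s and apply this blockwise, with the relaxation of adjacent-block distinctness absorbing the repeated copies of the same $x_i$. Your write-up is somewhat more careful about length preservation and the bound $p \geq k$, but there is no substantive difference.
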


The proof is straightforward:
For any natural number $\ell \geq 2$, we can decompose it as $\ell = \ell_1 + \ell_2 + \ldots + \ell_z \geq 2$, such that $2 \leq \ell_j \leq 3$ for $1 \leq j \leq z$. Consequently, for every $d_i>3$, we could decompose it into a sum
of 2's and 3's. Then, clearly, given a generalized SR-subsequence, we could easily obtain the corresponding SR-subsequence by combining $y_i^{e_i}y_{i+1}^{e_{i+1}}$ when
$y_i=y_{i+1}$.

We now design a dynamic programming algorithm for LSRS.
Let $L(i)$ be the length of the optimal LSRS solution for $S[1..i]$. Let
$Q2[i,j]$ and $Q3[i,j]$ store the longest square and cubic subsequences
of $S[i..j]$ respectively. The recurrence for $L(i)$ is as follows.
\begin{align*}
L(0) & = 0, \\
L(1) & = 0, \\
L(i) & = \max 
\begin{cases}
L(j) + Q2[j+1,i], & j < i-1  \\
L(j) + Q3[j+1,i], & j < i-2 
\end{cases}
\end{align*}

Computing all the cells $Q2[j,k]$ takes $O(n^4)$ time as there are $O(n^2)$ cells and each can be computed using Kosowski's algorithm in quadratic time. (As we will show right after Theorem 1, the $O(n^4)$ time bound can also be obtained without using
Kosowski's algorithm.) Computing all $Q3[j,k]$ takes $O(n^7)$ time: there are $O(n^2)$ cells,
each can be computed in $O(n^5)$ time using the only known brute-force solution. However, in the next subsection we show that
the longest cubic subsequences of all substrings of $S$, i.e., all $Q3[j,k]$ can be computed in $O(n^6)$ time.
Therefore, after $Q2[-,-]$ and $Q3[-,-]$ are all computed, it takes
$O(n^2)$ time to update and fill the whole table $L(-)$.
The value of the optimal LSRS solution for $S$ can be found in $L(n)$. 
Consequently, we have a running time of $O(n^6)$. To make the solution complete, we next show the algorithm for computing the longest cubic subsequences of all substrings of $S$.

\subsection{An $O(n^6)$ Time Bound for the Longest Cubic Subsequences of All Substrings of the Input String}

First of all, notice that an $O(n^5)$ time brute-force solution for the longest cubic subsequence problem is trivial: just enumerate in $O(n^2)$ time all the cuts cutting $S$ into three substrings, 
and then compute the longest common subsequence over this triple of substrings in $O(n^3)$ time. The longest of all would give us the solution.
Then, to compute all $Q3[j,k]$ it takes $O(n^7)$ time since there are $O(n^2)$ cells.
To improve this bound, a different idea is needed.

The idea is that when one computes the longest common subsequence
of three sequences $A,B$ and $C$, one would use dynamic programming
to compute, for each triple of $i,j,k$, the longest common subsequence of $A[1..i],B[1..j]$ and $C[1..k]$. When $i,j$ are fixed this dynamic programming algorithm can in fact compute the longest common subsequences of $A[1..i],B[1..j]$ and all $C[1..k']$, with $1\leq k'\leq k$. Therefore, by enumerating $i$ and $j$, in $O(n^2\cdot n^3)=O(n^5)$ time, we can compute all longest cubic subsequences of a prefix $A\cdot B\cdot C$ of $S$. To compute
the longest cubic subsequences of all substrings of $S$, it suffices to run the above algorithm on every suffix of $S$. Hence, in $O(n)\cdot O(n^5)=O(n^6)$ time we can compute the longest cubic subsequences of all substrings of $S$.

\begin{theorem}
The longest cubic subsequences of all substrings of an input string of length $n$ can be computed in $O(n^6)$ time and $O(n^3)$ space.
\end{theorem}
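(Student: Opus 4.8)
The plan is to make precise the informal argument already sketched right before the theorem statement: reduce the problem of computing longest cubic subsequences of all substrings to running a three-sequence LCS dynamic program on carefully chosen prefix/suffix decompositions of $S$, and then bound the total running time and space by analyzing how many times each such dynamic program is invoked and how much of its table we actually need to retain.

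First I would set up the standard three-sequence LCS dynamic program. Fix three strings $A$, $B$, $C$. Define $f[i,j,k]$ to be the length of the longest common subsequence of $A[1..i]$, $B[1..j]$, $C[1..k]$, with the usual recurrence: $f[i,j,k] = f[i-1,j-1,k-1] + 1$ if $A[i]=B[j]=C[k]$, and otherwise $f[i,j,k] = \max\{f[i-1,j,k], f[i,j-1,k], f[i,j,k-1]\}$, with base cases $f[\cdot,\cdot,0]=f[\cdot,0,\cdot]=f[0,\cdot,\cdot]=0$. The key observation to emphasize is that a single execution of this recurrence, filling the table in increasing order of $k$ (outermost), then $i$, then $j$, simultaneously yields $f[|A|,|B|,k']$ for every $k' \in [0,|C|]$ — that is, fixing a split point between the first two blocks and the third block is not needed; all split points of the third block come for free from one run. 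Dually, fixing the right endpoint of the whole substring and the split between $A$ and $B$, we can let both the start of $A$ and the split between $B$ and $C$ vary, but it is cleanest to organize the enumeration as: (outer loop) choose the left endpoint $a$ of the substring, i.e. work with the suffix $S[a..n]$; (middle loops) choose the split indices inside that suffix that separate the suffix into $A \cdot B \cdot C$; (the dynamic program) recovers all cubic subsequences of $S[a..r]$ as $r$ ranges over the suffix. Concretely, for the suffix $T = S[a..n]$, and for each pair of cut positions $p < q$ inside $T$, set $A = T[1..p]$, $B = T[p+1..q]$, and run the DP letting the $C$-prefix length vary; this gives the longest common subsequence of $A$, $B$, $C'$ for every prefix $C'$ of $T[q+1..n]$, hence a candidate cubic subsequence for every substring $S[a..r]$ with $r > a+\text{(something)}$. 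Taking, over all $a$ and all cuts $p,q$, the maximum candidate for each right endpoint $r$, we obtain $Q3[a,r]$ for all $a \le r$.

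Next I would do the bookkeeping. For a fixed suffix $S[a..n]$ of length $m = n-a+1$, enumerating the two internal cuts costs $O(m^2)$ choices, and for each the DP over the remaining $C$-part costs $O(m^3)$ (it is an LCS table on three strings of total length $O(m)$, with one dimension ranging over all prefixes of $C$). That is $O(m^5) = O(n^5)$ per suffix, and $O(n)$ suffixes give $O(n^6)$ time overall, matching the claim. For the space bound: a naive DP table for three strings each of length $O(n)$ is $O(n^3)$; since we process one $(a,p,q)$ triple at a time and only need the final layer $f[p,\,q-p,\,\cdot]$ (a one-dimensional array of values indexed by the right endpoint) to update the global answer arrays $Q3[\cdot,\cdot]$, we reuse the same $O(n^3)$ scratch table across all iterations. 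The output arrays $Q2[\cdot,\cdot]$ and $Q3[\cdot,\cdot]$ themselves are only $O(n^2)$, so the dominant space cost is the single reusable DP table: $O(n^3)$.

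The main obstacle — and the part that needs a careful, explicit argument rather than hand-waving — is establishing the \emph{correctness} of the decomposition: that every cubic subsequence $X^3$ of a substring $S[a..r]$ is captured by \emph{some} choice of suffix start $a$, internal cut positions $p,q$ within $S[a..r]$, and that the three occurrences of $X$ can be embedded disjointly and in order into $S[a..p]$, $S[p+1..q]$, $S[q+1..r]$. This is essentially the statement that if $X^3$ is a subsequence of $S[a..r]$ then there exist indices $a \le i_1 < i_2 < \dots$ realizing the three copies, and one may take $p$ to be between the last index of the first copy and the first index of the second, and $q$ between the last of the second and the first of the third; conversely, any common subsequence of the three pieces of such a partition is a cubic subsequence of $S[a..r]$. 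I would prove both directions by a short index-chasing argument, then note that for fixed $a$ and $r$ we are maximizing over all valid $(p,q)$, so $Q3[a,r]$ is computed exactly. A final remark would confirm that running the square-subsequence analogue (two blocks, one cut, DP over prefixes of the second block) within the same framework yields all $Q2[\cdot,\cdot]$ in $O(n^4)$ time without invoking Kosowski's algorithm, as promised in the text.
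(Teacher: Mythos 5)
Your proposal is correct and follows essentially the same route as the paper: enumerate the suffix start and the two internal cut points, and exploit the fact that one run of the three-sequence LCS dynamic program yields the answers for all prefixes of the third block simultaneously, giving $O(n^5)$ per suffix and $O(n^6)$ overall. Your added details (the index-chasing correctness argument for the cut decomposition and the explicit $O(n^3)$ reusable-table space accounting) only flesh out what the paper leaves implicit.
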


Note that we can use this idea to compute the longest square subsequences for all substrings of the input string $S$ in $O(n^4)$ time, without using Kosowski's
algorithm at all. In this case, using the standard dynamic programming for computing the longest common subsequence of $A$ and $B$, we compute all longest square
subsequences of a prefix $A\cdot B$ of the input sequence $S$
in $O(n^3)$ time. Then we run this algorithm on all suffix of $S$, giving a total running time of $O(n^4)$. In this process,
there is no need to use Kosowski's algorithm.

Finally, together with the algorithm in Section 3.1, we have the following theorem.

\begin{theorem}
The longest subsequence-repeated subsequence problem can be solved in $O(n^6)$ time.
\end{theorem}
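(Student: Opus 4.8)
The plan is to assemble the three ingredients already prepared in this section: the $O(n^6)$-time computation of all cells $Q3[j,k]$ guaranteed by Theorem 1, the $O(n^4)$-time computation of all cells $Q2[j,k]$ (via Kosowski's algorithm, or via the prefix/suffix trick noted right after Theorem 1), and the dynamic program for $L(\cdot)$ from Section 3.1. Once these are in place the running-time bound is immediate: filling $Q3[-,-]$ dominates at $O(n^6)$, filling $Q2[-,-]$ costs $O(n^4)$, and with both tables available each of the $O(n)$ values $L(i)$ is obtained by maximizing over $O(n)$ candidates, for $O(n^2)$ in total; hence $O(n^6)$ overall. The optimum is read from $L(n)$, and a standard traceback through the recurrence recovers an actual optimal subsequence within the same bound.

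The substance is proving that $L(n)$ equals the length of an optimal LSRS of $S$, which I would establish by two inequalities. For the ``$\ge$'' direction, fix an optimal LSRS solution $x_1^{d_1}\cdots x_k^{d_k}$ together with an embedding into $S$; this yields cut points $0 = c_0 \le c_1 \le \cdots \le c_k = n$ with $x_i^{d_i}$ embedded into the window $S[c_{i-1}+1..c_i]$. Apply Observation \ref{prop:decomp} to split each $d_i$ into a sum of $2$'s and $3$'s, refining the cut points accordingly, to get a generalized SR-subsequence $y_1^{e_1}\cdots y_p^{e_p}$ with $2\le e_\ell\le 3$ embedded into consecutive disjoint windows of $S$ and of the same total length. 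Each block $y_\ell^{e_\ell}$ with $e_\ell=2$ (resp.\ $e_\ell=3$) is a square (resp.\ cubic) subsequence of its window, so its length is at most the corresponding $Q2$ (resp.\ $Q3$) entry; an easy induction on $\ell$ then shows that the DP value at the right endpoint of the $\ell$-th window is at least the length of $y_1^{e_1}\cdots y_\ell^{e_\ell}$. Taking $\ell = p$ gives $L(n) \ge \sum_i d_i$.

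For the ``$\le$'' direction, unrolling the recurrence shows that any value $L(i)$ it produces corresponds to a string that is a concatenation of square and cubic subsequences drawn from pairwise-disjoint, left-to-right windows of $S[1..i]$; such a string is itself a subsequence of $S[1..i]$ and has the form $y_1^{e_1}\cdots y_p^{e_p}$ with $2 \le e_\ell \le 3$ and no requirement that consecutive bases differ, i.e.\ a generalized SR-subsequence. By the converse part of Observation \ref{prop:decomp}, merging maximal runs of equal consecutive $y_\ell$'s turns it into a genuine SRS of the same length (each new exponent is a sum of terms $\ge 2$, hence $\ge 2$, and consecutive bases are now distinct), which is therefore bounded above by the optimal LSRS length. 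Applying this at $i=n$ gives $L(n) \le \sum_i d_i$ for the optimum, and combining the two inequalities completes the proof.

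The only place that needs care is the bookkeeping in the ``$\ge$'' direction: one must check that refining the cut points when splitting exponents keeps the windows disjoint and ordered, and that no problematic empty window arises — it does not, since every block has exponent $\ge 2$ and thus needs a window of length $\ge 2$, which is exactly why the recurrence restricts to $j<i-1$ for $Q2$ and $j<i-2$ for $Q3$. I do not expect a genuine difficulty here; the single new and nontrivial component, the $O(n^6)$-time cubic-subsequence table for all substrings, is already furnished by Theorem 1, so this theorem is essentially an assembly step.
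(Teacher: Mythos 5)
Your proposal is correct and follows essentially the same route as the paper: the paper's proof of this theorem is precisely the assembly of the $L(\cdot)$ recurrence from Section~3.1 with the $O(n^4)$ computation of $Q2[-,-]$ and the $O(n^6)$ computation of $Q3[-,-]$ from Theorem~1, yielding $O(n^6)$ overall. Your two-inequality correctness argument (splitting exponents into $2$'s and $3$'s for one direction, merging equal consecutive bases for the other) just spells out in more detail what the paper delegates to Observation~\ref{prop:decomp} and the surrounding discussion.
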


\section{The Variants of LSRS}

In this section, we focus on the following variations of the LSRS problem.

\begin{definition}
\textbf{\emph{Constrained Longest Subsequence-Repeated Subsequence}}\\ ($\emph{LSRS}+$ for short)

{\bf Input}: A sequence $S$ with length $n$ over an alphabet $\Sigma$ and an integer $\ell$.

{\bf Question}: Does $S$ contain a subsequence-repeated subsequence $S'$ with length at least $\ell$ such that all letters in $\Sigma$ appear in $S'$?
\end{definition}

\begin{definition}
\textbf{\emph{Feasibility Testing}} (FT for short)

{\bf Input}: A sequence $S$ with length $n$ over an alphabet $\Sigma$.

{\bf Question}: Does $S$ contain a subsequence-repeated subsequence $S''$ such that all letters in $\Sigma$ appear in $S''$?
\end{definition}

For LSRS+ we are really interested in the optimization version, i.e., to maximize $\ell$. Note that, though looking similar, FT and the decision version of LSRS+ are different: if there is no feasible solution for
FT, certainly there is no solution for LSRS+; but even if there is a feasible
solution for FT, computing an optimal solution for LSRS+ could still be non-trivial.

Finally, let $d$ be the maximum number of times a letter in $\Sigma$ appears in $S$. Then, we can represent the corresponding versions for LSRS+ and FT as
\emph{LSRS+}$(d)$ and $FT(d)$ respectively.

It turns out that (the decision version of) \emph{LSRS+}$(d)$ and $FT(d)$ are both NP-complete when $d\geq 4$, while when $d=3$ both \emph{LSRS+}$(3)$ and $FT(3)$ can be solved in $O(n^4)$ time.
We present the details below.

\subsection{LSRS+(4) is NP-hard}

We first show that $(3^+,1,2^-)$-SAT is NP-complete; in this version of SAT all variables appear positively in 3-CNF clauses (i.e., clauses containing exactly 3 positive literals) and each variable appears exactly once in total in these 3-CNF clauses; moreover, the negation of the variables appear in 2-CNF clauses (i.e., clauses containing 2 negative literals), possibly many times. A {\em valid} truth assignment for an $(3^+,1,2^-)$-SAT instance $\phi$ is one which makes $\phi$ true; moreover, each 3-CNF clause has exactly one true literal.

A folklore reduction was discussed in the internet at some point; here we give a formal sketch of the proof.

\begin{theorem}
$(3^+,1,2^-)$-SAT is NP-complete.
\label{3color}
\end{theorem}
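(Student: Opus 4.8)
The plan is to reduce from $3$-COLORING, which is NP-complete. Membership in NP is immediate: a truth assignment is a certificate, and in polynomial time one can check both that it satisfies every clause and that every $3$-CNF clause receives exactly one true literal; so the real work is the hardness reduction.

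Given an instance $G=(V,E)$ of $3$-COLORING, I would introduce, for each vertex $v\in V$, three Boolean variables $v_1,v_2,v_3$, with the intended meaning that $v_i$ is true iff $v$ is assigned color $i$. For each $v\in V$ add the $3$-CNF clause $(v_1\vee v_2\vee v_3)$; for each edge $uv\in E$ and each color $i\in\{1,2,3\}$ add the $2$-CNF clause $(\neg u_i\vee\neg v_i)$. Let $\phi$ be the resulting formula. By construction every variable occurs positively exactly once, and only inside a $3$-CNF clause, while negated occurrences appear only in $2$-CNF clauses (variable $u_i$ being negated exactly $\deg_G(u)$ times); hence $\phi$ is a legal $(3^+,1,2^-)$-SAT instance, and it clearly has size polynomial in that of $G$ ($3|V|$ variables, $|V|$ $3$-clauses, $3|E|$ $2$-clauses).

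For correctness: if $c:V\to\{1,2,3\}$ is a proper coloring of $G$, set $v_i=\mathrm{true}$ iff $c(v)=i$. Then each $3$-clause $(v_1\vee v_2\vee v_3)$ has exactly one true literal, and for every edge $uv$ we have $c(u)\neq c(v)$, so $u_i$ and $v_i$ are not simultaneously true and $(\neg u_i\vee\neg v_i)$ is satisfied; thus this is a valid assignment for $\phi$. Conversely, from a valid assignment of $\phi$, the ``exactly one true literal'' condition on $(v_1\vee v_2\vee v_3)$ forces a unique color $c(v)$ for each $v$, and satisfaction of $(\neg u_i\vee\neg v_i)$ for each edge $uv$ and color $i$ guarantees $c(u)\neq c(v)$, so $c$ is a proper $3$-coloring. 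Hence $G$ is $3$-colorable iff $\phi$ admits a valid truth assignment.

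There is no real obstacle here — this is essentially the folklore reduction — and the only points needing care are bookkeeping ones: verifying that the occurrence pattern of the constructed formula matches the syntactic restrictions of $(3^+,1,2^-)$-SAT exactly (each variable positive exactly once, negations confined to $2$-clauses), and confirming that the ``exactly one true literal per $3$-clause'' semantics is precisely what upgrades ``some color chosen'' to ``exactly one color chosen''. Everything else — the polynomial running time and both directions of the equivalence — is routine.
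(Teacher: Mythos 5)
Your reduction is correct and follows the same skeleton as the paper's: three indicator variables per vertex, one positive $3$-clause $(v_1\vee v_2\vee v_3)$ per vertex, and the edge clauses $(\bar u_i\vee\bar v_i)$, with the ``exactly one true literal'' validity condition doing the work of selecting a unique color. The one substantive difference is that you omit the paper's \emph{type-1} clauses $(\bar u_i\vee\bar u_j)$ for $1\le i\ne j\le 3$, which forbid assigning two colors to the same vertex. For Theorem~\ref{3color} as stated this omission is harmless, since the ``exactly one true literal per $3$-clause'' requirement already forces uniqueness; your equivalence argument goes through. But be aware of two points. First, the paper deliberately keeps the type-1 clauses because the subsequent reduction to FT(4) relies on their presence in the constructed $\phi$ (the lemma there argues that deleting two of the three lists $L(x_{i,1}),L(x_{i,2}),L(x_{i,3})$ would lose the type-1 clause on the two corresponding literals); instances built your way would not support that later argument without modification. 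Second, a minor syntactic quibble: in your construction a variable attached to an isolated vertex has no negated occurrence at all ($\deg_G(u)=0$), which may or may not conform to the problem's requirement that negations ``appear in $2$-CNF clauses''; the type-1 clauses sidestep this by guaranteeing every variable at least two negative occurrences. Neither point invalidates your proof of the theorem itself.
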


\begin{proof}
As the problem is easily seen to be in NP, let us focus more on the reduction from 3-COLORING. In 3-COLORING, given a graph $G=(V,E)$, one needs to assign one of the 3 colors to each of the vertex $u\in V$ such that for any edge $(u,v)\in E$, $u$ and $v$ are given different colors.

For each vertex $u$, we use $u_1,u_2$ and $u_3$ to denote the 3 colors, then, obviously, we have the 3-CNF clause $(u_1\vee u_2\vee u_3)$. Therefore, the positive 3-CNF formulae are
$$C^+=\bigwedge_{u\in V}(u_1\vee u_2 \vee u_3).$$

We have 2 kinds of 2-CNF clauses. First, for each $u\in V$, we have a type-1 2-CNF clause which demands that one cannot select two colors $i$ and $j$ for $u$ at the same time:
$$\overline{u_i\wedge u_j}=(\bar{u}_i\vee \bar{u}_j),$$ for $1\leq i\neq j\leq 3$. Then, for each edge $(u,v)\in E$, we have a type-2 2-CNF clause which demands that $u$ and $v$ cannot have the same color $i$:
$$\overline{u_i\wedge v_i}=(\bar{u}_i\vee \bar{v}_i),$$
for $i=1,2,3$.

Let $C^-$ be the conjunction of these 2-CNF clauses. Then $\phi=C^+\wedge C^-$, and it is clear that $G$ has a 3-coloring if and only if $\phi$ has a valid truth assignment. The reduction obviously takes linear time. Hence the theorem is proven.
\end{proof}

\begin{figure}
    \centering
    \includegraphics[width=0.35\textwidth]{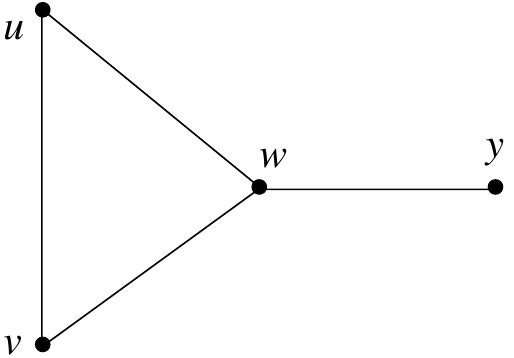}
    \caption{An illustration of the proof of Theorem~\ref{3color}. In this case, $C^+=(u_1\vee u_2\vee u_3)\wedge (v_1\vee v_2\vee v_3)\wedge (w_1\vee w_2\vee w_3)\wedge (y_1\vee y_2\vee y_3)$.}
    \label{fig:3color}
\end{figure}

In Fig.~\ref{fig:3color}, we show an example for the proof of Theorem~\ref{3color}. The example will be used in the following
paragraphs.

We next reduce $(3^+,1,2^-)$-SAT to FT(4). Let the input $\phi$ for $(3^+,1,2^-)$-SAT be constructed directly from a 3-COLORING instance; moreover, let $\phi$ have $3n$ variables $x_1,x_2,\cdots$, $x_{3n}$ and $m$ 2-CNF clauses. We label its 3-CNF clauses as $F^+_1,F^+_2,\cdots,F^+_n$ and its 2-CNF clauses as
$F^-_1,F^-_2,\cdots,F^-_m$. 
(For the example in Fig.~\ref{fig:3color}, we can take as $u_1,u_2,u_3,\cdots,y_1,y_2,y_3$ alphabetically as $x_1,x_2,\cdots,x_{12}$.) 

In the example in Fig.~\ref{fig:3color}, the type-1 2-CNF clauses on $u$ are
$$(\bar{u}_1\vee \bar{u}_2)\wedge (\bar{u}_1\vee \bar{u}_3) \wedge (\bar{u}_2\vee \bar{u}_3)=F^-_1\wedge F^-_2\wedge F^-_3,$$ 
the type-2 2-CNF clauses on edge $(u,v)$ are
$$(\bar{u}_1\vee \bar{v}_1)\wedge (\bar{u}_2\vee \bar{v}_2) \wedge(\bar{u}_3\vee \bar{v}_3)=F^-_4\wedge F^-_5\wedge F^-_6,$$
and the type-2 clauses on edge $(u,w)$ are
$$(\bar{u}_1\vee \bar{w}_1)\wedge (\bar{u}_2\vee \bar{w}_2)\wedge (\bar{u}_3\vee \bar{w}_3)=F^-_7\wedge F^-_8\wedge F^-_9.$$

For each variable $x_i$, let
$L(x_i)$ be the list of type-1 2-CNF clauses containing $\bar{x}_i$ followed with the list of type-2 2-CNF clauses containing $\bar{x}_i$, each repeating twice consecutively. (For the example in Fig.~\ref{fig:3color}, $L(x_1)=L(u_1)=F^-_1F^-_1F^-_2F^-_2\cdot F^-_4F^-_4F^-_7F^-_7$.) For each $F^+_i$ we also define three unique letters $1_i,2_i$ and $3_i$. Hence, the alphabet we use to construct the final
sequence $H$ is $$\Sigma=\{F^-_j|j\in[m]\} \cup \{1_i,2_i,3_i|i\in[n]\} \cup \{g_k,g'_k|k\in[n-1]\},$$
where $g_k$ and $g'_k$ are used as separators.

Let $F^+_i=(x_{i,1}\vee x_{i,2} \vee x_{i,3})$. We construct
$$H_i=2_i\cdot L(x_{i,1})\cdot 1_i2_i1_i\cdot L(x_{i,2})\cdot 2_i \cdot L(x_{i,3})\cdot 3_i2_i3_i,$$ 
where $2_i$ and all the 2-CNF clauses each appear 4 times in $H_i$, while $1_i$ and $3_i$ each appears twice. Finally we construct a sequence $H$ as
$$H=H_1\cdot g_1g'_1g_1g'_1\cdot H_2\cdot g_2g'_2g_2g'_2\cdot H_3 \cdots\cdot g_{n-1}g'_{n-1}g_{n-1}g'_{n-1}\cdot H_n,$$
where $g_j$ and $g'_j$ each appears twice.
We claim that $\phi$ has a valid truth assignment if and only if $H$ induces a feasible SRS which contains all $1_i,2_i,3_i$ ($i=1..n$), all $F^-_j$ ($j=1..m$) and all $g_{k}g'_{k}g_{k}g'_{k}$ ($k=1..n-1$).

The forward direction, i.e., when $\phi$ has a valid truth assignment, is straightforward. In this case, suppose exactly one of $x_{i,1}$, $x_{i,2}$ and $x_{i,3}$ (say $x_{i,j}$, $1\leq j\leq 3$) is assigned TRUE, then we delete $L(x_{i,j})$ for $j=1,2$ or $3$ 
in $H_i$. Finally we delete some $1_i,2_i$ and $3_i$ to obtain a feasible solution $H'_i$ as follows.
\begin{enumerate}
\item If $j=1$, we have $H'_i=2_i1_i2_i1_i \cdot L(x_{i,2})\cdot L(x_{i,3})\cdot 3_i3_i$.
\item If $j=2$, we have $H'_i=L(x_{i,1})\cdot 1_i2_i1_i2_i \cdot L(x_{i,3})\cdot 3_i3_i$.
\item If $j=3$, we have $H'_i=L(x_{i,1})\cdot 1_i1_i \cdot L(x_{i,2})\cdot  2_i3_i2_i3_i$.
\end{enumerate}
It is noted that exactly one $L(x_{i,j})$ is deleted in all three cases. (We focus on $j=1$ next.) Hence, the deleted letters (2CNF clauses in the form of $F^-_k$) in $L(x_{i,1})$ would still appear in the claimed feasible solution, even after the deletion of $L(x_{i,1})$. For example, if $F^-_k$ is type-1 which contains $\bar{x}_{i,1}$ and $\bar{x}_{i,2}$, then $F^-_kF^-_k$ must appear in $L(x_{i,2})$, which is not deleted.
Similarly, if $F^-_k$ is type-2 which contains $\bar{x}_{i,1}$ and $\bar{x}_{\ell,1}$, where $(x_i,x_\ell)$ is an edge in the graph $G$, then $F^-_kF^-_k$ appears in $L(x_{\ell,1})$ which must be
kept --- if $L(x_{\ell,1})$ were also deleted, it would imply that both
$x_i$ and $x_\ell$ are colored with color-1.
Therefore, all the 2CNF clauses appear in the claimed feasible solution.

The reverse direction is slightly more tricky. We first show the following lemma.
\begin{lemma}
If $H$ admits a feasible (SRS) solution, then exactly two (non-empty subsequences) of $L(x_{i,1}), L(x_{i,2})$ and $L(x_{i,3})$ appear in the feasible solution $H'$ (or, exactly one of the three is completely deleted from $H$).
\end{lemma}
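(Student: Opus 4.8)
The plan is to fix an arbitrary feasible SRS $H'$ of $H$, together with a witnessing embedding of $H'$ into $H$, and to reconstruct the ``skeleton'' of $H'$ inside each gadget
\[
H_i=2_i\cdot L(x_{i,1})\cdot 1_i2_i1_i\cdot L(x_{i,2})\cdot 2_i\cdot L(x_{i,3})\cdot 3_i2_i3_i .
\]
Feasibility forces every letter of $\Sigma$ to occur at least twice in $H'$. Since $1_i$ occurs in $H$ only inside the factor $1_i2_i1_i$ of $H_i$ and any SR-block containing $1_i$ repeats its unit at least twice, both copies of $1_i$ are used and lie in a \emph{single} SR-block $B_1=x_t^{2}$ with $x_t$ containing exactly one $1_i$; likewise both copies of $3_i$ lie in a single block $B_3=x_s^{2}$, and $B_1\neq B_3$ because in $H$ no $1_i$ occurs after a $3_i$. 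The first step is to pin down the possible shapes of $x_t$ and $x_s$. The two occurrences of $1_i$ are separated in $H$ by only one letter (the middle $2_i$ of $1_i2_i1_i$), so the suffix of the first copy of $x_t$ after its $1_i$ together with the prefix of the second copy before its $1_i$ must fit into that single-position gap; hence $x_t\in\{1_i,\ 1_i2_i,\ 2_i1_i\}$. Symmetrically $x_s\in\{3_i,\ 3_i2_i,\ 2_i3_i\}$, but $3_i2_i$ is impossible because $H$ contains no $2_i$ to the right of the last $3_i$ of $H_i$; so $x_s\in\{3_i,\ 2_i3_i\}$.

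Next I would show that at least one of $L(x_{i,1}),L(x_{i,2}),L(x_{i,3})$ is completely deleted by $H'$, by case analysis on $(x_t,x_s)$, using repeatedly that the SR-blocks of $H'$ occupy pairwise order-disjoint position intervals of $H$, so no letter of $H'$ can be mapped strictly between the images of two consecutive letters of one block. If $x_t=2_i1_i$, then $B_1$ maps two consecutive letters onto the leading $2_i$ of $H_i$ and onto the first $1_i$ of $1_i2_i1_i$, and $L(x_{i,1})$ lies exactly between these positions in $H$, hence is untouched. If $x_t=1_i2_i$, then in the second copy of $x_t$ the $1_i$ lands on the second $1_i$ of $1_i2_i1_i$ and the $2_i$ must land on the standalone $2_i$ between $L(x_{i,2})$ and $L(x_{i,3})$ (landing instead on the $2_i$ of $3_i2_i3_i$ would force $B_1$ to interleave with $B_3$); this brackets $L(x_{i,2})$, which is untouched. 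If $x_s=2_i3_i$, then by the same reasoning the $2_i$ of the first copy of $x_s$ lands on that standalone $2_i$ and the following $3_i$ on the first $3_i$ of $3_i2_i3_i$, bracketing $L(x_{i,3})$. Finally, if $x_t=1_i$ and $x_s=3_i$, neither $B_1$ nor $B_3$ uses a $2_i$; the $2_i$ inside $1_i2_i1_i$ and the $2_i$ inside $3_i2_i3_i$ are then both unusable (each lies strictly between consecutive images of $B_1$, resp.\ $B_3$), leaving only the leading $2_i$ of $H_i$ and the standalone $2_i$ available; to make $2_i$ occur in $H'$ one would need a further block using both of these, but its image would bracket the whole factor $1_i2_i1_i$, so $B_1$ could be placed neither before nor after it --- a contradiction. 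Hence in every case some $L(x_{i,j})$ is completely deleted.

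For the converse --- that at most one of the three lists is deleted --- I would argue by contradiction. The lists $L(x_{i,1}),L(x_{i,2}),L(x_{i,3})$ are precisely $L(v_1),L(v_2),L(v_3)$ for the three colour-variables of the vertex $v$ whose positive clause is $F^+_i$. If two of them, say $L(v_\alpha)$ and $L(v_\beta)$, were deleted, then the type-1 clause $(\bar v_\alpha\vee\bar v_\beta)$ --- which in $H$ occurs only inside $L(v_\alpha)$ and $L(v_\beta)$, both contained in $H_i$ --- would have no surviving copy in $H'$, contradicting feasibility. Combining the two directions yields that exactly one of the three $L$-lists is deleted, i.e.\ exactly two of them appear as non-empty subsequences of $H'$, as claimed.

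I expect the technical heart of the argument to be the first two steps: exactly identifying which positions of $H$ the blocks $B_1$, $B_3$ (and, in the last subcase, the hypothetical extra $2_i$-block) must occupy, and converting ``the blocks of $H'$ are order-disjoint'' into the concrete statements ``this entire $L$-factor is skipped'' and ``those two blocks would have to interleave.'' The separator factors $g_kg'_kg_kg'_k$ are not needed for this particular lemma, but they play exactly this anchoring role between consecutive gadgets in the remainder of the reduction.
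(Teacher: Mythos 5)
Your proof is correct, and its overall strategy matches the paper's: the ``at most one list deleted'' half is argued exactly as in the paper (a type-1 clause $(\bar v_\alpha\vee\bar v_\beta)$ lives only in $L(v_\alpha)$ and $L(v_\beta)$, so both cannot vanish), and the ``at least one list deleted'' half rests on the same structural facts about where $1_i$, $2_i$, $3_i$ sit in $H_i$. Where you differ is in how that second half is organized. The paper argues by contradiction on the shape of $H'_i$ itself, enumerating six substring patterns ($1_i1_i$ and $3_i3_i$ both present, $2_iX\cdot 2_iX$ straddling various lists, etc.) and dismissing each with a one-line remark that some letter would then fail to appear; the exhaustiveness of those six cases and the one-line dismissals are left largely to the reader. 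You instead work with a fixed embedding, first pin down the block $B_1=x_t^2$ containing both $1_i$'s to $x_t\in\{1_i,\,1_i2_i,\,2_i1_i\}$ and $B_3=x_s^2$ to $x_s\in\{3_i,\,2_i3_i\}$ (using that only one position separates the two $1_i$'s, resp.\ $3_i$'s), and then run an exhaustive case analysis on $(x_t,x_s)$, converting order-disjointness of block images into the explicit conclusion that a whole $L$-factor is bracketed and hence skipped, with the residual case $x_t=1_i$, $x_s=3_i$ killed by the unavailability of two usable copies of $2_i$. This buys a cleaner and more verifiably exhaustive argument at the cost of carrying the embedding machinery explicitly; the paper's version is shorter but sketchier. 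I see no gap in your reasoning.
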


\begin{proof}
Due to the type-1 2-CNF clauses constructed on $x_{i,1}$, $x_{i,2}$ and $x_{i,3}$, if exactly one (non-empty subsequence) of the three lists $L(x_{i,1})$, $L(x_{i,2})$ and $L(x_{i,3})$, say $L(x_{i,1})$, appears in a feasible SRS solution $H'$, then the type-1 2-CNF clause involving $x_{i,2}$ and $x_{i,3}$ would be missing in $H'$, contradicting the definition of a feasible solution. 

On the other hand, due to the construction of $H_i$, we claim that one cannot
leave all three (non-empty subsequences) of the lists $L(x_{i,1})$, $L(x_{i,2})$ and $L(x_{i,3})$ in a feasible SRS solution $H'$ (we only need to look at the part obtained from $H_i$, e.g., $H'_i$). 
Suppose it is not the case, we consider six cases: (1) if $1_i1_i$ and $3_i3_i$ are substrings of $H'_i$, (2) if $2_iX\cdot 2_iX$ is a substring of $H'_i$ with the first $X$ being a substring in $L(x_{i,1})$ and the second $X$ being a substring in $L(x_{i,2})$, (3) if $2_iX\cdot 2_iX$ is a substring of $H'_i$ with the first $X$ being a substring in $L(x_{i,2})$ and the second $X$ being a substring in $L(x_{i,3})$, (4) if $X2_i\cdot X2_i$ is a substring of $H'_i$ with the first $X$ being a substring in $L(x_{i,1})$ and the second $X$ being a substring in $L(x_{i,2})$,
(5) if $X2_i\cdot X2_i$ is a substring of $H'_i$ with the first $X$ being a substring in $L(x_{i,2})$ and the second $X$ being a substring in $L(x_{i,3})$, and
(6) $XX$ is a substring of $H'_i$ where the first $X$ is a subsequence of $L(x_{i,1})$ or $L(x_{i,1})\cdot L(x_{i,2})$ or $L(x_{i,1})\cdot L(x_{i,2})\cdot L(x_{i,3})$. We claim that all these six cases are not possible:
\begin{itemize}
\item Case 1: It is impossible as $2_i$'s would not appear in a feasible solution.
\item Cases 2-4: It is impossible as $1_i$'s would not appear in a feasible solution.
\item Case 5: It is impossible as $3_i$'s would not appear in a feasible solution.
\item Case 6: It is impossible as $1_i$'s and $2_i$'s would not appear in a feasible solution.
\end{itemize}
\end{proof}

Let $L'(x_{i,j})$ be a non-empty subsequence of $L(x_{i,j})$. With the above lemma, the reverse direction can be proved as follows. 
\begin{enumerate}
\item If $H'_i=2_i1_i2_i1_i \cdot L'(x_{i,2})\cdot L'(x_{i,3})\cdot 3_i3_i$, then assign $x_{i,1}\leftarrow {TRUE}$, $x_{i,2}\leftarrow {FALSE}$, $x_{i,3}\leftarrow {FALSE}$.
\item If $H'_i=L'(x_{i,1})\cdot 1_i2_i1_i2_i \cdot L'(x_{i,3})\cdot 3_i3_i$,
then assign $x_{i,1}\leftarrow {FALSE}$, $x_{i,2}\leftarrow {TRUE}$, $x_{i,3}\leftarrow {FALSE}$.
\item If $H'_i=L'(x_{i,1})\cdot 1_i1_i \cdot L'(x_{i,2})\cdot  2_i3_i2_i3_i$, then assign $x_{i,1}\leftarrow {FALSE}$, $x_{i,2}\leftarrow {FALSE}$, $x_{i,3}\leftarrow {TRUE}$.
\end{enumerate}
Clearly, this gives a valid truth assignment for $\phi$ --- as all the 2-CNF clauses $(\bar{x}_{i,j}\vee \bar{x}_{k,\ell})$ must appear in $L'(x_{i,j})$ or $L'(x_{k,\ell})$ in $H'$, and at least one of $x_{i,j}$ and $x_{k,\ell}$ is assigned {FALSE}. We thus have the following theorem.

\begin{theorem}
FT(4) is NP-complete.
\end{theorem}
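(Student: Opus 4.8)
The plan is to establish NP-completeness of FT(4) by combining the chain of reductions already set up in the excerpt. Membership in NP is immediate: a feasible SRS solution $S''$, together with the decomposition into SR-blocks $x_1^{d_1}\cdots x_k^{d_k}$ and a witness embedding of each $x_i^{d_i}$ into $H$, is a polynomial-size certificate that can be verified in polynomial time; we also need to check that every letter of $\Sigma$ occurs in $S''$, which is trivially polynomial. For NP-hardness, the source problem is $(3^+,1,2^-)$-SAT, shown NP-complete in Theorem~\ref{3color} via 3-COLORING, and the target is FT(4) with the instance $H$ constructed above. Note that $H$ has the property that every letter of $\Sigma$ appears at most $4$ times (the clause letters $F^-_j$ and the letters $2_i$ appear exactly $4$ times in $H_i$, $1_i,3_i$ appear twice, and the separators $g_k,g'_k$ appear twice), so this is a legitimate FT(4) instance; and the construction is clearly computable in polynomial (indeed linear) time in the size of $\phi$.

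The heart of the argument is the correctness equivalence: $\phi$ has a valid truth assignment if and only if $H$ admits a feasible SRS. The forward direction is handled by the explicit construction in the excerpt: given a valid assignment, exactly one of $x_{i,1},x_{i,2},x_{i,3}$ is TRUE in each clause $F^+_i$, and deleting the corresponding list $L(x_{i,j})$ together with a suitable choice of $1_i,2_i,3_i$ yields a valid SR-block structure $H'_i$; the crucial point, argued above, is that any $F^-_k$ deleted along with $L(x_{i,1})$ still survives in another list $L(x_{i,2})$ or $L(x_{\ell,1})$ that is retained — the latter because a type-2 clause $(\bar x_{i,1}\vee\bar x_{\ell,1})$ encodes an edge and a proper coloring never assigns color $1$ to both endpoints. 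Assembling the $H'_i$ with the untouched separators $g_kg'_kg_kg'_k$ gives a feasible SRS containing all required letters. The reverse direction uses the Lemma: in any feasible solution, for each $i$ exactly one of the three lists $L(x_{i,j})$ is completely deleted (the Lemma rules out deleting zero or two of them via the six-case analysis on how $1_i,2_i,3_i$ could appear), so $H'_i$ must be one of the three forms displayed, and reading off $x_{i,j}\leftarrow$TRUE for the deleted list yields an assignment that sets exactly one literal true per $3$-CNF clause and satisfies every $2$-CNF clause, since each such clause $(\bar x_{i,j}\vee\bar x_{k,\ell})$ appears (doubled) inside $L(x_{i,j})$ and inside $L(x_{k,\ell})$, at least one of which is retained.

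The main obstacle — and the step requiring the most care — is the Lemma's six-case analysis, i.e., verifying that one cannot retain all three lists $L(x_{i,1}),L(x_{i,2}),L(x_{i,3})$ inside $H'_i$ while keeping all of $1_i,2_i,3_i$ present in a valid SR-block decomposition. The subtlety is that an adversarial solution might try to pair the two copies of $2_i$ across different lists, or use a long block $XX$ spanning multiple lists; one must check that every such attempt forces one of $1_i$ or $3_i$ (each occurring only twice) to be split across a block boundary in a way that destroys it, or forces $2_i$ itself to be unusable. Since the excerpt already carries out this case split, my plan is simply to invoke the Lemma, then combine it with the forward construction and the reverse assignment-extraction to conclude both directions of the equivalence, and finally state that NP-hardness of FT(4) follows from the NP-completeness of $(3^+,1,2^-)$-SAT (Theorem~\ref{3color}); together with membership in NP this proves FT(4) is NP-complete.
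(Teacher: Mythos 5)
Your proposal is correct and follows essentially the same route as the paper: membership in NP by certificate verification, then NP-hardness via the reduction from $(3^+,1,2^-)$-SAT using the constructed string $H$, with the forward direction given by deleting the list of the unique TRUE literal in each positive clause and the reverse direction given by the lemma forcing exactly one list per gadget to be deleted. The only minor imprecision is attributing both halves of the lemma to the six-case analysis (that analysis only excludes retaining all three lists; the exclusion of retaining just one rests on the type-1 clauses), but this does not affect the argument.
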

 
Since FT(4) is NP-complete, the optimization problem $\emph{LSRS+}(4)$ is certainly NP-hard.

\begin{corollary}
The optimization version of $\emph{LSRS+}(4)$ is NP-hard.
\end{corollary}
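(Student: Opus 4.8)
The statement to prove is the Corollary: the optimization version of LSRS+(4) is NP-hard.

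\medskip

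\noindent\textbf{Proof proposal.}
The plan is to derive the hardness of the optimization problem directly from the NP-completeness of $FT(4)$, which has just been established, by observing that a good-enough approximate or exact answer to the optimization version of $\emph{LSRS+}(4)$ decides the feasibility question $FT(4)$. First I would recall the decision version of $\emph{LSRS+}(4)$: given $S$ (with every letter occurring at most $4$ times) and an integer $\ell$, decide whether $S$ has an SRS $S'$ of length at least $\ell$ that uses \emph{every} letter of $\Sigma$. The key point is that $FT(4)$ is exactly the special case of this decision problem where one sets the threshold $\ell$ to be trivially small — say $\ell = 2|\Sigma|$, since any feasible SRS must contain each of the $|\Sigma|$ letters at least once and every SR-block contributes at least two copies, so a feasible SRS always has length at least $2$ (and in fact at least $2|\Sigma|$ is forced only when blocks are letters, but any $\ell \le 2$ suffices as a safe lower bound: $\ell = 2$ works, since a non-empty feasible solution has length $\ge 2$). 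Thus the instance $(S,\ell=2)$ of the decision version of $\emph{LSRS+}(4)$ has answer YES if and only if the $FT(4)$ instance $S$ has answer YES.

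\medskip

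\noindent The second step is to connect this decision problem to the optimization version. The optimization version of $\emph{LSRS+}(4)$ asks for the maximum $\ell^\ast$ such that $S$ admits a feasible SRS (one using all letters of $\Sigma$) of length $\ell^\ast$, with the convention that $\ell^\ast = -\infty$ (or "no solution") when no feasible SRS exists at all. Given an oracle (or a polynomial-time algorithm) for this optimization problem, we feed it the sequence $H$ produced by the reduction in the proof of the $FT(4)$ theorem. If the oracle reports that a feasible solution exists (equivalently $\ell^\ast \ge 2$), we answer YES to $FT(4)$; otherwise we answer NO. Since $H$ is constructed in linear time from the $(3^+,1,2^-)$-SAT instance, which in turn comes in linear time from the $3$-COLORING instance, this is a polynomial-time Turing reduction from $FT(4)$ — hence from an NP-complete problem — to the optimization version of $\emph{LSRS+}(4)$, establishing its NP-hardness.

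\medskip

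\noindent There is essentially no technical obstacle here; the only thing to be careful about is the definition of "feasible" in the optimization version, i.e., that the optimization problem is required to return a genuine feasible SRS (covering all of $\Sigma$) and not merely the longest unrestricted SRS. Provided the optimization version is defined so that its optimum is finite exactly when a $\Sigma$-covering SRS exists, detecting whether the optimum is finite solves $FT(4)$, and the Corollary follows immediately. If one prefers a many-one style argument, an alternative is to note that $FT(4)$ is literally the decision problem "$\ell^\ast \ge 2$?", so any algorithm computing $\ell^\ast$ in polynomial time would decide the NP-complete problem $FT(4)$ in polynomial time; this is the phrasing I would actually write, since it is the cleanest.
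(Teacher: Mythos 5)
Your proposal is correct and matches the paper's argument, which is simply the observation that an algorithm (exact or even approximate) for the optimization version of \emph{LSRS+}$(4)$ would decide whether a feasible $\Sigma$-covering SRS exists, i.e., would solve the NP-complete problem FT(4). The paper states this in one sentence; you have merely spelled out the same reduction in more detail, so there is nothing further to add.
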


Note that the above proof implies that the optimization version of $\emph{LSRS+}(4)$ does not admit any polynomial-time approximation (regardless of the approximation) as any such approximate solution would
form a feasible solution for FT(4). In fact, using a similar argument as
in \cite{DBLP:conf/cpm/LaiLZZ22}, even finding a good bi-criteria approximation, i.e., approximating the optimal length as well as the maximum number of letters covered, for
$\emph{LSRS+}(4)$ is not possible (unless P=NP). On the other hand, we show next that
$\emph{LSRS+}(3)$ is polynomially solvable.

\subsection{LSRS+(3) is Polynomially Solvable}

We now try to solve LSRS+(3), where the input is a sequence $S$ of length $n$ where each letter appears at most three times and at least twice. As a matter of fact,
an optimal solution must be in the form of $x_1^{d_1}\cdots x_k^{d_k}$, where $x_i$ is a subsequence of $S$ and $d_i\in\{2,3\}$ for $i\in [k]$. Throughout this subsection we assume that all letters in $S$ appear at least twice
and at most three times --- if a letter appears only once in $S$ then there is no solution for the corresponding LSRS+(3) instance.

Our idea is again dynamic programming, based on the above observation that in an optimal solution each SR-block is either a square or a cube. Define 6 tables, $S_{2}[i,j], C_2[i,j], S_{3}[i,j]$, $C_3[i,j]$, $L[i,j]$ and $C[i,j]$, with $1\leq i < j \leq n$. The first 4 are only used to initialize $L[i,j]$'s. 

\begin{itemize}
\item $C_2[i,j]$ is the set of letters that all appear at least twice in $S[i..j]$ if at least one letter appears exactly twice in $S[i,j]$; otherwise $C_2[i,j]$ is empty.

\item $S_2[i,j]$ is the length of a longest square subsequence in $S[i,j]$ containing all the letters in $C_2[i,j]$.
If such a local feasible solution does not exist, set $S_2[i,j]\leftarrow -1$; otherwise, we say that this local feasible solution is {\em 2-feasible}.
 
\item $C_3[i,j]$ is the set of letters that all appear three times in $S[i..j]$ if no letter in $S[i,j]$ appears exactly twice; otherwise $C_3[i,j]$ is empty.

\item $S_3[i,j]$ is the length of a longest cubic subsequence (if exists) in $S[i,j]$ containing all the letters in $C_3[i,j]$; otherwise $S_3[i,j]$ is the length of a longest square subsequence containing all the letters in $C_3[i,j]$. 
If such a local feasible solution (cube or square) does not exist, set
$S_3[i,j]\leftarrow -1$; otherwise, we say that this local feasible solution is {\em 3-feasible} or {\em 2-feasible} respectively (depending on whether the local solution is cubic or square).

\item $C[i,j]$ is the set of letters appearing at least twice in $S[i..j]$. 
The $C[i,j]$'s can be computed, each with a linear scan,
in a total of $O(n^3)$ time. $C[i,j]$'s are only used to enforce the coverage condition.

\item $L[i,j]$ is the length of a feasible solution of $S[i..j]$ which covers all the letters in $C[i,j]$. 
\end{itemize}

The initial values of $L[i,j]$, for $i<j$, can be set as follows.

\begin{align*}
L[i,j] & = 
\begin{cases}
S_3[i,j] &  \text{if~} S_3[i,j]>0  \\
S_2[i,j] & \text{else if~} S_2[i,j]>0 \\
-1 & \text{otherwise.}
\end{cases}
\end{align*}

Note that the initialized solution might not be final. For example,
$S[1..9]=ababbcacc$, then $S_3[1,9]=S_2[1,9]=-1$ and $L[1,9]=-1$, i.e., there is no local 3-feasible (cubic) or 2-feasible (square) solution that covers all the letters in $C[1..9]$. But obviously an optimal solution, i.e., $(ab)^2c^3$, exists. Hence we need to proceed to update $L[i,j]$.

Then we update the general case for $L[i,j]$ recursively as follows. This is done bottom-up, ordered by the ascending length of $S[i..j]$.

\begin{align*}
    L[i,j]&=
    \begin{cases}
    \max\{\max_{i<k<j}\{L[i,k]+L[k+1,j]\},L[i,j]\} & \text{if~} L[i,k]>0, L[k+1,j]>0\text{~and~} \\
    & C[i,j]==C[i,k]\cup C[k+1,j]\\
    -1 & \text{otherwise.}
    \end{cases}
\end{align*}

Two examples can be used to illustrate the update step.
In the first example, $S[1..9]=abacbabcc$, $L[1,9]$ is initially
assigned with $S_3[1,9]=6$ (which corresponds to $(abc)^2$). After the update step $L[1,9]=8$ (i.e., corresponds to $(ab)^3(cc)$).
In the second example, $S[1..9]=abacabccb$, $L[1,9]$ is also
initialized with $S_3[1,9]=6$ (which corresponds $(abc)^2$). After the
update step $L[1,9]=6$ (which corresponds to $(ab)^2(cc)$ or $(abc)^2$ --- in the actual implementation, there is no need to perform an explicit update for this example; but such a piece of information cannot be known before the update step, as shown in the first example).

Note that the condition $C[i,j] == C[i,k]\cup C[k+1,j],$
is to ensure that $L[i,j]$ is updated only when
all the letters in $C[i,j]$ are covered.
Then, the maximum of $L[i,k]+L[k+1,j]$, if greater than $L[i,j]$, replaces (the previous) $L[i,j]$.

An optimal solution is computed if $L[1,n]>0$, and 
its solution value is stored in $L[1,n]$. Clearly, with an additional table, one can easily retrieve such an optimal solution, if exists.


Regarding the correctness of our algorithm, we have several simple lemmas.
\begin{lemma}
$C_2[i,j]\cap C_3[i,j]=\emptyset$.    
\end{lemma}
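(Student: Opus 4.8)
The plan is a short case analysis driven by the defining condition that gates each of the two sets. Observe that each of $C_2[i,j]$ and $C_3[i,j]$ is declared non-empty only under a mutually exclusive hypothesis about $S[i..j]$: $C_2[i,j]$ is non-empty only if \emph{at least one} letter appears exactly twice in $S[i..j]$, whereas $C_3[i,j]$ is non-empty only if \emph{no} letter appears exactly twice in $S[i..j]$. Hence in any situation at most one of the two sets is non-empty.

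Concretely, I would split on whether some letter of $\Sigma$ occurs exactly twice in $S[i..j]$. In the first case, by definition $C_3[i,j]=\emptyset$, so $C_2[i,j]\cap C_3[i,j]=\emptyset$ trivially. In the second case no letter occurs exactly twice in $S[i..j]$, so by definition $C_2[i,j]=\emptyset$, and again the intersection is empty. Since these two cases are exhaustive, $C_2[i,j]\cap C_3[i,j]=\emptyset$ in all cases.

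There is essentially no obstacle here: the statement is immediate from the mutual exclusivity of the triggering conditions in the definitions of $C_2[\cdot,\cdot]$ and $C_3[\cdot,\cdot]$, and no properties of squares, cubes, or the recurrences for $S_2, S_3, L$ are needed. The only thing to be careful about is to phrase the dichotomy in terms of ``exactly twice'' (not ``at least twice''), matching the wording of the definitions precisely.
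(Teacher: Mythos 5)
Your proposal is correct and matches the paper's own argument: both rest on the observation that $C_2[i,j]$ can be non-empty only when some letter occurs exactly twice in $S[i..j]$, while $C_3[i,j]$ can be non-empty only when no letter does, so the two triggering conditions are mutually exclusive and the intersection is always empty. No difference in approach worth noting.
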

\begin{proof}
This is obvious, as, by definition, $C_2[i,j]$ is non-empty only when there is a letter appearing exactly twice in $S[i,j]$. On the other hand, $C_3[i,j]$ is non-empty when there is no letter appearing exactly twice in $S[i,j]$. The two conditions are complementary.
\end{proof}
Regarding $S_3[i,j]$, the following lemma says that if a 3-feasible solution does not exist then any of those 2-feasible solutions could be stored.
\begin{lemma}
If a 3-feasible solution for $S_3[i,j]$ does not exist, then any 2-feasible solution for $S[i,j]$ can be stored (without changing the optimal solution value).
\end{lemma}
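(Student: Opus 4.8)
The plan is to prove Lemma 4.3 (the final statement: ``If a 3-feasible solution for $S_3[i,j]$ does not exist, then any 2-feasible solution for $S[i,j]$ can be stored without changing the optimal solution value'') by examining when the two situations arise and what $S_3[i,j]$ is used for downstream. First I would recall the definition: $C_3[i,j]$ is nonempty precisely when no letter in $S[i..j]$ appears exactly twice, i.e. every repeated letter appears exactly three times; in that case $S_3[i,j]$ should store a longest cubic subsequence covering $C_3[i,j]$ if one exists, and otherwise a longest square subsequence covering $C_3[i,j]$. The claim to establish is that when the cubic (3-feasible) option is unavailable, the particular choice among 2-feasible square subsequences is irrelevant for the final answer $L[1,n]$.

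The key observation I would use is that $S_3[i,j]$ (and $S_2[i,j]$) feed only into the \emph{initialization} of $L[i,j]$, and the coverage requirement there is always $C[i,j]$, the full set of letters appearing at least twice in $S[i..j]$. When $C_3[i,j]\neq\emptyset$ we have $C[i,j]=C_3[i,j]$ exactly (since every repeated letter appears three times, so it appears at least twice, and conversely), so a 2-feasible solution for $S_3[i,j]$ is by definition a square subsequence covering $C[i,j]$ — which is exactly what a valid initialization of $L[i,j]$ must be. Thus storing any 2-feasible solution yields a legal candidate for $L[i,j]$. The substance of the lemma is then that its \emph{length} does not matter: I would argue that any 2-feasible square solution for $S[i,j]$ has the same length, namely the length of a longest square subsequence of $S[i..j]$ subject to covering $C[i,j]$. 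This is immediate from the definition of $S_3[i,j]$ in the 2-feasible case — it is \emph{defined} as ``the length of a longest square subsequence containing all the letters in $C_3[i,j]$'' — so all 2-feasible solutions stored have identical length, and hence the initialization of $L[i,j]$, and therefore every later value $L[i',j']$ that may combine it, is unchanged regardless of which witness subsequence is chosen.

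I would then spell out the one subtlety that makes this more than a tautology: ``storing a solution'' in the algorithm means recording an actual subsequence (via the auxiliary table used for retrieval), not just its length, so one must check that downstream updates — the recurrence $L[i,j]=\max\{\max_{i<k<j}\{L[i,k]+L[k+1,j]\},L[i,j]\}$ with the condition $C[i,j]=C[i,k]\cup C[k+1,j]$ — depend on the stored solutions only through their lengths and covered-letter sets, both of which are invariant across 2-feasible witnesses. Since the concatenation of two SR-block solutions is again a valid SR-subsequence (squares and cubes juxtaposed), and the coverage check is purely in terms of the $C[\cdot,\cdot]$ sets, the recurrence never inspects the internal structure of a stored witness. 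Hence replacing one 2-feasible witness by another changes neither any $L$ value nor the feasibility of any later combination, and in particular $L[1,n]$ is unaffected.

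The main obstacle I anticipate is not mathematical depth but precision about what ``without changing the optimal solution value'' quantifies over: I must make clear that it refers to the global output $L[1,n]$ and that the argument relies on the algorithm's modular structure (block concatenation plus set-based coverage), rather than on any property of a specific square. A secondary point to handle carefully is the degenerate case where \emph{no} 2-feasible solution exists either — then $S_3[i,j]$ is set to $-1$ and $L[i,j]$ falls through to $S_2[i,j]$ or to $-1$; the lemma is vacuous there, but I would state this explicitly so the case analysis is complete. Once these bookkeeping points are nailed down, the proof is short.
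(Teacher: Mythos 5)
Your proof is correct as far as the lemma's literal statement goes, but it reaches the conclusion by a different route than the paper, and in doing so it leans on a near-tautology where the paper makes a small but substantive counting observation. You argue: (a) when $C_3[i,j]\neq\emptyset$ we have $C[i,j]=C_3[i,j]$, so any 2-feasible witness is a legal initializer for $L[i,j]$; (b) all 2-feasible solutions have the same length \emph{because they are by definition longest} covering squares; and (c) the recurrence reads only lengths and coverage sets, so the choice of witness is invisible downstream. The paper instead proves the stronger quantitative fact that \emph{every} square subsequence of $S[i..j]$ covering $C_3[i,j]$ has length exactly $2\cdot|C_3[i,j]|$: a letter occurring once in $S[i..j]$ cannot occur in any square subsequence, and a letter of $C_3[i,j]$ occurs exactly three times in $S[i..j]$, hence exactly twice in any square containing it (four occurrences being impossible). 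This is the real content of the lemma --- it shows the length is forced, so one need not search for a ``longest'' covering square at all, and it is precisely what the running-time analysis of the $O(n^4)$ theorem later relies on (checking whether a longest square subsequence attains length $2\cdot|C_3[i,j]|$). Your step (c) about the DP's modular interface is a reasonable supplement that the paper leaves implicit, but you should add the counting argument, since without it your claim of identical lengths rests only on the word ``longest'' in the definition and misses why any covering square is automatically optimal.
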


\begin{proof}
By definition, $C_3[i,j]$ contains all the letters in $S[i,j]$ which appear exactly three times; moreover, there is no letter
$x$ which appears exactly twice in $S[i,j]$. Hence, if a letter
$y$ appears exactly once in $S[i,j]$ it would never appear as
a local feasible solution for $S_3[i,j]$. 

Therefore, if a 3-feasible solution does not exist, by definition, we would consider only a 2-feasible solution for $S[i..j]$ which
covers all the letters in $C_3[i,j]$. The length of such a 2-feasible solution is exactly
$2\cdot |C_3[i,j]|$.
\end{proof}

An example for this lemma is $S[1..6]={\tt baabab}$. There is no 3-feasible solution. On the other hand, either {\tt abab} or {\tt baba} would make a valid 2-feasible solution, to be stored in $S_3[1,6]$. On the other hand, {\tt aabb} could also make a final solution via the update of $L[i,j]$, made of two 2-feasible solutions {\tt aa} and {\tt bb}, for $S_2[1,3]$ and $S_2[4,6]$ respectively. But {\tt aabb} itself is not considered as a 2-feasible solution in, say, $S[1..6]$.

\begin{theorem}
Given a string $S$ of length $n$, where each letter appears at most three times, the problem of LSRS+(3) can be solved in $O(n^4)$ time.
\end{theorem}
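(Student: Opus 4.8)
The plan is to establish correctness of the dynamic program and then bound its running time. First I would argue correctness in two directions. For soundness, I would show by induction on the length of $S[i..j]$ that whenever $L[i,j]>0$, the stored value is the length of a genuine feasible SRS of $S[i..j]$ covering all letters of $C[i,j]$: the base cases are handled by the initialization via $S_2$ and $S_3$ (here I would invoke the preceding lemmas, in particular that when no 3-feasible cube exists the stored 2-feasible square still covers $C_3[i,j]$, and that $C_2\cap C_3=\emptyset$ so the initialization never double-counts), and the inductive step follows because concatenating a feasible solution of $S[i..k]$ with one of $S[k+1,j]$ yields a feasible SRS of $S[i..j]$ (possibly merging adjacent equal blocks, as in Observation~\ref{prop:decomp}), and the guard $C[i,j]=C[i,k]\cup C[k+1,j]$ guarantees full coverage. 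For completeness, I would take an optimal LSRS+(3) solution of $S[1..n]$, write it as $x_1^{d_1}\cdots x_k^{d_k}$ with $d_i\in\{2,3\}$ (legal since every letter occurs at most three times), and locate the cut points in $S$ where one SR-block ends and the next begins; each block $x_i^{d_i}$ is a square or cube of some substring $S[a_i..b_i]$, and I would argue that this block, being itself feasible on $S[a_i..b_i]$ with respect to the letters it must cover, is captured either by $S_2[a_i,b_i]$ or $S_3[a_i,b_i]$ (and hence by the initial $L[a_i,b_i]$), so the chain of concatenations in the recurrence reconstructs a value at least that of the optimum. The matching upper bound from soundness then gives equality.

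The subtle point in the completeness argument — and the step I expect to be the main obstacle — is reconciling the \emph{global} coverage requirement (all of $\Sigma$, i.e., $C[1,n]$) with the \emph{local} coverage requirements baked into $S_2$, $S_3$, and the recurrence's guard. A letter that appears, say, three times globally might have all three occurrences inside a single sub-block $S[a_i..b_i]$, in which case within that interval it "looks like" a triple and must be covered by a cube there; but it might instead be split across two adjacent sub-blocks, in which case neither local table is obligated to cover it as a cube, yet the union condition $C[i,j]=C[i,k]\cup C[k+1,j]$ still forces it to appear somewhere. I would need to check carefully that the definitions of $C_2$, $C_3$, $S_2$, $S_3$ were chosen precisely so that any block of an optimal solution is always "locally feasible" in the sense the tables demand — the $S[1..9]=ababbcacc$ example in the text shows exactly why the naive initialization is insufficient and why the split via the recurrence is essential — and that conversely no feasible reassembly is ever blocked. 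Showing that the interval endpoints of an optimal solution can always be chosen to make each piece locally feasible (choosing, when a letter is split, to place its occurrences so the two halves are each non-empty subsequences) is the heart of the matter.

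For the running time: there are $O(n^2)$ tables $S_2,C_2,S_3,C_3,C,L$, each with $O(n^2)$ cells. The sets $C[i,j]$, $C_2[i,j]$, $C_3[i,j]$ are computed by a linear scan per cell, $O(n^3)$ total. By Theorem~\ref{thm:cubic} (the $O(n^6)$ bound for longest cubic subsequences of all substrings) — or rather its square-subsequence analogue running in $O(n^4)$, plus the cubic computation when needed — the $S_2[i,j]$ and $S_3[i,j]$ values are obtained within $O(n^6)$; but since $S_3[i,j]$ only needs a genuine cube when $C_3[i,j]\neq\emptyset$, i.e., when no letter of $S[i..j]$ appears exactly twice, and we only ever need the longest cubic subsequence that covers all thrice-occurring letters (a constrained variant on an interval where the relevant structure is rigid), I would argue this is computable in $O(n^4)$ overall rather than $O(n^6)$ — this is the one place where the "$d=3$" hypothesis does real work, keeping the cube computation cheap. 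Finally, filling $L[i,j]$ via the recurrence costs $O(n)$ per cell for the $\max$ over $k$ (the set-equality check $C[i,j]=C[i,k]\cup C[k+1,j]$ can be done in $O(|\Sigma|)=O(n)$ time, or precomputed), for $O(n^3)$ total. Hence the dominating term is $O(n^4)$, giving the claimed bound, and an auxiliary table recording the optimal $k$ lets us recover the solution itself.
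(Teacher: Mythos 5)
Your proposal follows essentially the same route as the paper: correctness of the interval DP via the coverage guard $C[i,j]=C[i,k]\cup C[k+1,j]$, plus a running-time analysis dominated by filling $S_2$ and $S_3$. The outline is sound, and your worry about reconciling local and global coverage resolves itself under $d\le 3$: any letter that the global solution covers must have at least two occurrences inside a single block's interval, so it lies in the $C[\cdot,\cdot]$ of that interval and in no sibling interval's $C$, which is exactly why the union guard is both satisfiable along the optimal decomposition and sufficient for soundness (the paper states this only tersely, via $C[i,k]\cap C[k+1,j]=\emptyset$).

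The one step you defer with ``I would argue this is computable in $O(n^4)$'' is precisely the paper's key observation, so you should make it explicit: when $C_3[i,j]\neq\emptyset$, every letter of $C_3[i,j]$ appears \emph{exactly} three times in $S[i..j]$, so a cube covering all of $C_3[i,j]$ must use all $3|C_3[i,j]|$ of those occurrences and nothing else of multiplicity less than three; hence the only candidate is obtained by restricting $S[i..j]$ to the letters of $C_3[i,j]$, cutting that string into three consecutive pieces of length $|C_3[i,j]|$, and checking that the pieces coincide. This is an $O(n)$ test per cell, i.e., $O(n^3)$ over all cells, and the fallback square case costs $O(n^2)$ per cell by Kosowski's algorithm (here one also uses that, with $d\le 3$, any square uses each letter at most twice, so ``longest square has length $2|C_3[i,j]|$'' is equivalent to ``some square covers $C_3[i,j]$''; the same equivalence justifies computing $S_2[i,j]$ from an unconstrained longest-square computation). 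With that rigidity argument supplied, your $O(n^4)$ bound and the rest of the proof match the paper's.
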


\begin{proof}
Regarding the correctness of the update of $L[i,j]$, it is noted that when $L[i,j]$ is updated by $L[i,k]+L[k+1,j]$ the condition that $C[i,j]==C[i,k]\cup C[k+1,j]$ is enforced, i.e., the maximum length of $L[i,j]$ is achieved on the condition that all letters in $S[i,j]$ appear in the corresponding solution. Note also that
$C[i,k]\cap C[k+1,j]=\emptyset$ as each letter can appear at most
three times in $S[i,j]$, if a letter appears two or three times in
$S[i,k]$ it cannot appear two or three times in $S[k+1,j]$ again --- by the definition of $C_2[-,-]$ and $C_3[-,-]$.

The cost of the algorithm is dominated by filling and updating $S_2[i,j]$ and $S_3[i,j]$ as the update of $L[i,j]$'s takes $O(n^3)$ time. $S_3[-,-]$ has
$O(n^2)$ cells, each can be initially filled in $O(n)$ time as a longest cubic subsequence containing all the letters in $C_3[i,j]$, if exists, must be unique --- all these letters appear exactly three times hence we just cut the sequence of these letters in three segments, each with length $|C_3[i,j]|$, to have a 3-feasible solution (if it is). Checking whether all the letters in $C_3[i,j]$ are covered by this cubic subsequence corresponding to $S_3[i,j]$ takes linear time.
If a 3-feasible solution for $S_3[i,j]$ does not exist, then we need to find a longest square subsequence of $S[i,j]$ and check if it has a length $2\cdot |C_3[i,j]|$. (Remember that if both of these two tests fail then we need to update $S_3[i,j]$ as $-1$.) Hence the total time for filling and updating all $S_3[i,j]$ is $O(n^4)$ --- there are $O(n^2)$ cells each can be filled in $O(n^2)$ time using Kosowski's algorithm (or using the method outlined right after Theorem 1 without using Kosowski's algorithm) in the worst case. Similarly, the filling of $S_2[i,j]$ takes $O(n^4)$ time as well.
\end{proof}

Clearly, LSRS+(3) has a solution (i.e., $L[1,n]>0$) if and only if FT(3) has a feasible solution.

\section{Concluding Remarks}
In this paper, we present an $O(n^6)$ time algorithm for computing the longest cubic subsequences of all substrings of an input string $S$ of length $n$. On top of that, we compute the longest subsequence-repeated subsequence of $S$ in $O(n^6)$ time as well. We then consider the constrained longest subsequence-repeated subsequence (LSRS+) and the corresponding feasibility testing (FT) problems in this paper, where all letters in the alphabet must occur in the solutions. We parameterize the problems with $d$, which is the maximum number of times a letter appears in the input sequence. For convenience, we summarize the results one more time in Table 1. Obviously, the most prominent open problem is to decide if it is possible to compute the longest cubic subsequence
in $o(n^5)$ time. Note that besides the trivial brute-force method, using our alignment graph a different $O(n^5)$ time algorithm can be obtained (see Section 3.2).

\begin{table}
	\centering
\caption{Summary of results on LSRS+ and FT.} 
	\begin{tabular}{|c|c|c|c|}
\hline $d$ & \emph{LSRS+}$(d)$ & $FT(d)$&Approximability~of~\emph{LLDS+}$(d)$\\
\hline $d\geq 4$ & NP-hard & NP-complete & No~Approximation \\
\hline $d=3$ & P & P & Not~Applicable\\
\hline
	\end{tabular}
\end{table}

\section*{Acknowledgments}
We thank anonymous reviewers for some insightful comments.


\end{document}